\documentclass[12pt]{article}

\usepackage[margin=1in]{geometry}
\usepackage{amsmath, amssymb, amsthm}
\usepackage{mathtools}
\usepackage{booktabs}
\usepackage{array}
\usepackage{hyperref}
\usepackage{xcolor}
\usepackage{enumitem, csquotes}
\usepackage{authblk}

\DeclareUnicodeCharacter{2500}{---}   
\DeclareUnicodeCharacter{2013}{--}    
\DeclareUnicodeCharacter{2014}{---}   
\newtheorem{theorem}{Theorem}[section]
\newtheorem{lemma}[theorem]{Lemma}
\newtheorem{corollary}[theorem]{Corollary}
\newtheorem{proposition}[theorem]{Proposition}
\newtheorem{definition}[theorem]{Definition}

\newcommand{\iv}[1]{\bigl[#1\bigr]}   
\newcommand{\sig}{\sigma}  
\newcommand{\Sin}{S^{+}} 
\newcommand{\wnd}{w_{\mathrm{nd}}}             
\newcommand{\OmegaW}{\Omega}  
\newcommand{\Sout}{S^{-}}                       

\newcommand{\DS}{D_S^*}
\newcommand{\DSfull}{D_S}

\newcommand{\Splus}{S^+}
\newcommand{\Sminus}{S^-}

\newcommand{\du}{d_u}
\newcommand{\DSstar}{D_S^{*}}

\title{%
	\textbf{Antimagic Labeling of Central Digraphs
	}\\[2pt]
	\textbf{ Constructed from Loops}
}

\author[1]{Raja Rawat}

\affil[1]{%
	Department of Data Science, Chandigarh University,
	Uttar Pradesh 209859, India \\
	\normalfont\small\href{mailto:20rawraj@gmail.com}{\texttt{20rawraj@gmail.com, raja.l100436@culko.in}}
}
\date{}
\begin{document}
	
	\maketitle

\begin{abstract}
Let $(T, \circ)$ be a loop of order $k \geq 3$ and $\DSstar$ the loopless
central digraph obtained from the associated central digraph $\DSfull$
by removing the $k$ self-loops at the diagonal vertices.
We construct an explicit arc-labeling $f$ of $D_S^*$ ordering all
arcs lexicographically by $(\text{tail}, \text{head})$ and assigning
labels $1, \ldots, m$ in that order and prove that $f$ is an
antimagic labeling. Consequently, $\DS$ is antimagic for every loop of order
$k \geq 3$, a property that depends only on $k$ and is invariant
across all non-isomorphic loops of the same order.

\medskip

\medskip

\noindent\textbf{2010 Mathematics Subject Classification:} 05C78, 05C20, 20N05, 05C25.

\noindent\textbf{Keywords:} Antimagic labeling, Central digraph, Loop, 
Quasigroup, Oriented vertex sum, Iverson bracket.
\end{abstract}	
\section{Introduction}\label{sec:intro}

The subject of graph labelling is a major one in graph theory and involves assigning integers to the elements of a graph. The components can be vertices, edges, or arcs, the assignments being governed by certain constraints.
Since Rosa's seminal 1967 work on graceful~labellings \cite{Rosa1967}, numerous labelling paradigms have been investigated. Among these, \emph{antimagic labeling} has attracted sustained attention because of the elegance of its definition and pecularity.

An antimagic labelling of a graph~$G$ with $m$ edges is a bijection
$f\colon E(G) \to \{1, 2, \ldots, m\}$ such that the vertex sums where the
sum at a vertex~$v$ is the total of the labels on all edges incident
to~$v$ are pairwise distinct. Hartsfield and Ringel~\cite{HR1990}
introduced the concept in 1990 and conjectured that every connected graph other
than $K_2$ (complete graph of order $2$) is antimagic. Alon, Kaplan, Lev, Roditty, and Yuster~\cite{kaplan}
established antimagicness for graphs whose minimum degree is at least
$c \log n$ for an absolute constant~$c$. Cranston, Liang, and
Zhu~\cite{CranLiangZhu2015} settled the conjecture for regular graphs of odd
degree, and Eccles~\cite{Eccles2016} proved the result for all graphs of
sufficiently large average degree.  

The directed analogue of antimagic labelling was introduced by Hefetz, M\"utze, and Schwartz~\cite{HMS2010}, who defined an antimagic labeling of a directed
graph~$D$ having adjacency matrix $A$ with $m$ arcs as a bijection $f\colon A \to \{1, \ldots, m\}$ such that the \emph{oriented vertex sums}
$w(v) = \Splus(v) - \Sminus(v)$, difference between the sum of labels on
arcs entering~$v$ and the sum on arcs leaving~$v$ are pairwise distinct.
They also conjectured that every connected graph admits an antimagic orientation and proved partial cases; the conjecture was subsequently verified for regular graphs of even degree by Li, Song, Wang, Yang, and Zhang~\cite{LSWYZ2019}, for biregular bipartite graphs by Shan and Yu~\cite{ShanYu2017}, and for all graphs with minimum degree at least $33$ by Shan~\cite{shan}.

A largely independent line of research concerns \emph{central groupoids} and
their associated digraphs.  A central groupoid is a set~$S$ equipped with a
binary operation~$*$ satisfying the identity $(x * y) * (y * z) = y$ for all
$x, y, z \in S$. Equivalently, a \emph{central digraph} $D_S$ is a directed graph on the vertex set  $S \times S$ defined by the rule:
\[
(x_1, x_2) \rightarrow (y_1, y_2) \quad \text{if and only if} \quad [x_2 = y_1 \text{ and } y_2 \neq 0] \text{ or } [x_1 \ast x_2 = y_1 \text{ and } y_2 = 0].
\]. The associated adjacency matrix~$A$ satisfies $A^2 = J_n$, where $J_n$ is the $n \times n$ matrix of all-ones.  Let $S$ be a set having a binary operation \enquote{$\circ$}. Then $S$ is said to be a left quasigroup if it is a groupoid and the equation $a \circ x=b$ for all $a,b$ has a unique solution in $x$. Also, a left quasigroup with identity is called a left loop, and a left loop $(S, \circ)$ is called a loop if $x \circ a = b,$ also has a unique solution in $S$. Every finite central groupoid has order $n=k^2$, for some positive integer $k$, and that every vertex in the central digraph corresponding to the central groupoid has in-degree and out-degree equal to $k$.
This elegant algebraic and combinatorial duality was explored by Knuth~\cite{knuth} in his foundational study of central groupoids. Furthur Curtis, Drew, Li, and Pragel~\cite{curtis} subsequently developed a comprehensive spectral theory for central digraphs, characterizing the possible ranks and eigenvalue structures of their adjacency matrices.

The bridge between central digraphs and loop theory was established by
Rawat~\cite{Rawat2023}. In \cite[Prop. 3.1  - 3.2]{Rawat2023} he established that  $\chi : \mathcal{A} \rightarrow \mathcal{B}$ is bijective where $\mathcal{A}$ be the set of all left quasigroups with left identity and right zero  structure on $S$ and $\mathcal{B}$ be the set of all left quasigroup structures on $S.$ In Proposition $3.3$ he showed  isomorphism preserverance between the quasigroup with left identity and right zero and the corresponding quasigroup structures. We consider $S = \{0, 1, 2, \ldots, k - 1\}$, and the binary operation defined on $S$ is `$\ast$’ satisfying $0 \ast x = x$ and $x \ast 0 = 0$ for all $x \in S$. We represent it by $(T_S, \ast)$, left quasigroup with a left identity and a right zero. We denote $D_S$ as the directed graph on the vertices $S \times S$ defined by the rule above. The loopless variant $\DSstar$, obtained by deleting the $k$
self-loops at the diagonal vertices $(i, i)$, retains a rich degree structure: every non-diagonal vertex has in-degree and out-degree equal to~$k$, while each
diagonal vertex has both degrees equal to $k - 1$. In this paper, we bring together the theory of antimagic labelings and the algebraic framework of central digraphs constructed from loops.  Our main result
is the following.

\begin{theorem}
	\label{thm:main}
	Let $(T, \circ)$ be any loop of order $k \geq 3$ and $D_S$ the
	associated loopless central digraph on $n = k^2$ vertices with
	$m = k^3 - k$ arcs. The consecutive labeling $f$, ordering arcs lexicographically by $(\textup{tail}, \textup{head})$ and assigning labels
	$1, \ldots, m$ in that order, is an antimagic labeling of $D_S^{*}$.
	Hence $D_S^{*}$ is antimagic.
\end{theorem}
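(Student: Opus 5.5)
The plan is to compute the oriented vertex sum $w(v)=\Splus(v)-\Sminus(v)$ of every vertex of $\DSstar$ in closed form under $f$, and then show that these values are pairwise distinct. First I will make the arc structure explicit. By the defining rule, the out-neighbours of $(a,b)$ are the $k-1$ vertices $(b,c)$ with $c\neq 0$, together with the single vertex $(a\ast b,0)$. After the self-loops are removed, a diagonal vertex $(i,i)$ loses exactly one of these arcs.

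Because $f$ orders arcs lexicographically by tail, the out-arcs of a vertex $v=(a,b)$ receive a block of consecutive labels $P(v)+1,\dots,P(v)+d^+(v)$. Here $P(v)$ is the number of arcs whose tail precedes $v$. This equals $k(ka+b)$ minus the number of diagonal vertices preceding $(a,b)$, which can be written with an Iverson bracket as $\min(a,\,\cdot)$-type corrections such as $a+\iv{b>a}$. Hence $\Sminus(v)=d^+(v)P(v)+\binom{d^+(v)+1}{2}$ exactly. The position of a given head inside the block is also explicit: the heads $(a\ast b,0)$ and $(b,c)$ are ordered lexicographically. So the label of each arc is $P(\text{tail})$ plus a rank in $\{1,\dots,k\}$, and that rank is expressible with Iverson brackets in $a$, $a\ast b$, $b$ and $c$.

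Next I compute $\Splus$ in two cases.
\begin{itemize}
\item For $v=(b,c)$ with $c\neq0$, the in-arcs come from $(a,b)$ for all $a\in S$, omitting $a=b$ when $b=c$. So $\Splus(v)=\sum_a P(a,b)+\text{(ranks)}$. The leading part is $k^2\binom{k}{2}+k^2 b$ plus corrections of order $k^2$, and it depends essentially only on $b$.
\item For $v=(c,0)$, the in-arcs come from the pairs $(a,b)$ with $a\ast b=c$. Since $\ast$ is a left quasigroup, for each $a$ there is exactly one such $b=b_a(c)$. Thus $\Splus(c,0)=\sum_a P(a,b_a(c))+\text{(ranks)}$.
\end{itemize}
Only this second case depends on the particular loop. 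I will not compute it exactly. Instead I will trap it in an interval: $\sum_a k\,b_a(c)$ is at most $k\binom{k}{2}$ apart over all choices, and since $b\mapsto a\ast b$ is a bijection the multiset $\{b_a(c)\}$ is controlled. This is what makes the conclusion independent of the isomorphism type.

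Combining the pieces, the $k^3 b$ terms cancel for $c\neq 0$ vertices. This gives $w(b,c)=C_k - k^2 c + k\,(\text{term in } b) + O(k)$-corrections, where $C_k$ is a constant. The separation argument is then lexicographic. The dominant term $-k^2c$ (with $c\ge1$) separates the columns $c$. Within a fixed column, the residual dependence on $b$ has step of order $k$, and the Iverson corrections are shown to be smaller than $k$, so vertices in the same column are distinguished. The column $c=0$ vertices have $\Sminus(c,0)\approx k^3c$ but $\Splus$ concentrated near a constant independent of $c$. I will show that their sums lie in a range disjoint from those of the $c\neq0$ vertices, and that they are mutually distinguished by the term $-k^3 c$, which dominates the $O(k^2)$ fluctuation coming from the loop. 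The hypothesis $k\ge3$ is used to make these size inequalities strict.

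The main obstacle I expect is this final inequality bookkeeping. There are three pieces to control: the diagonal corrections (degree $k-1$ and shifted $P$ values), the rank terms, and the loop-dependent term in $\Splus(c,0)$. Their combined size must be strictly below the gap between consecutive leading values, and the cancellation of the $k^3 b$ term makes the within-column gaps only of order $k$. If the crude bounds fail there, my first fallback is to compute the rank contributions exactly for $c\neq0$ vertices, which is possible because there the heads do not involve $\ast$. I will reserve interval estimates for the $c=0$ column only, and check small cases such as $k=3$ directly if the general inequality is tight.
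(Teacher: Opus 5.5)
There is a genuine gap, and it sits where you expect the loop to enter: the $c=0$ column. Two errors in your bookkeeping come first.

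\textbf{Leading terms.} The leading-order bookkeeping behind your separation scheme is wrong. For $v=(b,c)$ with $c\neq0$ and $b\neq c$, you have $\Sminus(v)=kP(b,c)+\binom{k+1}{2}\approx k^3b+k^2c$. On the other side, $\Splus(v)=C(b)+kc+O(k)$ with $C(b)=(k^2-1)\bigl[\binom{k}{2}+b\bigr]$, so nothing cancels $k^3b$. Carrying your own rank analysis through exactly gives
\[ w(b,c)=\OmegaW-(k-1)^2(k+1)\,b-k(k-1)\,c+(k-1)\iv{c>b}-\#\{a: a\ast b>b\}. \]
So the first coordinate separates blocks at spacing about $k^3$, and the second separates by $k(k-1)$ inside a block (compare Proposition~\ref{prop:wnd}). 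It is not ``columns by $k^2$, rows by $O(k)$''.

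\textbf{Loop dependence of the $c\neq0$ in-sums.} The last term above shows these in-sums do depend on $\ast$. The arc $(a,b)\to(b,c)$ has rank $c+\iv{a\ast b\le b}-\iv{a=b}\iv{b<c}$, because $(a\ast b,0)$ may precede or follow $(b,c)$ in the block of $(a,b)$. This shift is at most $k-1$ and essentially constant along a row, so it is harmless. Still, your claim that only the $c=0$ column depends on the loop is false.

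\textbf{The $c=0$ column (the fatal step).} $\Splus(c,0)$ contains $k\sum_a b_a(c)$, where $b_0(c)=c$ and $b_a(c)\in\{1,\dots,k-1\}$ otherwise. Under the left-quasigroup axioms this term can therefore move by $k(k-1)(k-2)$. That is an $O(k^3)$ fluctuation, the same order as the row spacing, not $O(k^2)$.

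These weights also do not occupy a separate range. Already for $x\ast y=y$, $w(c,0)$ lies between $w(c-1,k-1)$ and $w(c,1)$, only $k(k-1)$ below the former. Even when $\ast$ restricted to $S\setminus\{0\}$ is a quasigroup, $w(1,0)$ is pushed up by $k(k-1)(k-2)/2-O(k)$. For $k\ge5$ this lands it above $w(0,k-1)$, among the weights of row $0$.

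So trapping $w(c,0)$ in an interval cannot separate it from the other weights. No argument using only the left-quasigroup properties of $\ast$ (which is all your plan invokes) can succeed. Here is a counterexample for $k=4$:
\[ 0\ast y=y,\quad y\ast0=0,\quad 1\ast y=y,\quad 2\ast1=2,\ 2\ast2=1,\ 2\ast3=3,\quad 3\ast1=3,\ 3\ast2=2,\ 3\ast3=1. \]
The lexicographic labeling then gives $\Splus(0,3)=3+19+34+49=105$ and $\Sminus(0,3)=54$. It also gives $\Splus(1,0)=4+20+39+58=121$ and $\Sminus(1,0)=70$. Hence $w(0,3)=w(1,0)=51$. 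A correct proof would have to use further structure of $\ast$ coming from the loop, and would have to compute $w(c,0)$ exactly.

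\textbf{Comparison with the paper.} The paper takes a different route: exact closed forms (Propositions~\ref{prop:insum}--\ref{prop:wdiag}) and strict decrease of $w$ along $u=ak+b$. However, its in-sum formula takes the in-neighbours of every $(a,b)$ to be the vertices $(l,a)$, with ranks as in the digraph for $x\ast y=y$ only. So the paper does not address the $\ast$-dependence you correctly identified, and it does not supply the missing step.
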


\section{Setup and Notation}

Let $(T, \circ)$ be a loop of order $k \geq 3$ with identity $0$,
and let $(S, *)$ be the associated left quasigroup with left identity
and right zero on $S = \{0, 1, \ldots, k-1\}$.
Let $\DS$ denote the \emph{loopless central digraph} on vertex set
$S \times S$, obtained from $\DSfull$ by removing the $k$ self-loops
at the diagonal vertices $(i, i)$, $i \in S$.

We encode vertex $(l_1, l_2)$ as the integer
$u = l_1 k + l_2 \in \{0, 1, \ldots, k^2 - 1\}$,
so the vertices carry the total order
\[
0 < 1 < \cdots < k^2 - 1.
\]
The arc set of $\DS$ has size
\[
m = k^3 - k,
\]
since $\DSfull$ has $k^3$ arcs in total (each of the $k^2$ vertices has
out-degree $k$) and we remove $k$ self-loops.

\begin{definition}[Antimagic Labeling]
An \emph{antimagic labeling} of $\DS$ is a bijection
$f : A(\DS) \to \{1, \ldots, m\}$
such that the \emph{oriented vertex sums}
\[
w(u) = S^{+}(u) - S^{-}(u),
\qquad
S^{+}(u) = \sum_{\substack{x \in A(\DS) \\ x \text{ enters } u}} f(x),
\quad
S^{-}(u) = \sum_{\substack{x \in A(\DS) \\ x \text{ leaves } u}} f(x),
\]
are pairwise distinct across all $n = k^2$ vertices.
\end{definition}

\begin{definition}[Lexicographic order]
	\label{lexorder}
	Let $D_{S}$ be the central digraph with vertex set
	$S \times S$, where vertices are encoded as pairs
	$(l_{1}, l_{2})$ with integer encoding $u = l_{1}k + l_{2}$.
	The arcs of $D_{S}$ are ordered \emph{lexicographically} as
	follows.
	
	An arc $(u, v) = \bigl((l_{1}, l_{2}),\,(m_{1}, m_{2})\bigr)$
	\emph{precedes} an arc
	$(u', v') = \bigl((l_{1}', l_{2}'),\,(m_{1}', m_{2}')\bigr)$
	if
	\[
	u \;<\; u'
	\qquad\text{(as integers)}
	\qquad\text{or}\qquad
	\bigl(\,u = u' \;\text{ and }\; v < v'\,\bigr).
	\]
	Equivalently, using the integer encoding $u = l_{1}k + l_{2}$:
	\[
	\text{Arc}\;(u,v)
	\;<\;
	\text{Arc}\;(u',v')
	\;\iff\;
	u < u',
	\quad\text{or}\quad
	\bigl(\,u = u' \;\text{ and }\; v < v'\,\bigr).
	\]
	This is the standard \emph{dictionary order}: arcs are sorted
	first by tail vertex $u$ \textup{(}ascending\textup{)}, with
	ties broken by head vertex $v$ \textup{(}ascending\textup{)}.
\end{definition}

Since the ordering is a total order on $A(D_S)$, the map $f$ is
bijection onto $\{1, \ldots, m\}$ by construction.

\begin{lemma} \label{outsum}
Under the labeling $f$, the out-sum of every vertex $u$ satisfies the following equation:
\[
S^{-}(u) = \du \cdot \sigma_{u} + \frac{\du(\du + 1)}{2},
\]
where $d_u = \deg^+(u)$ is the out-degree of $u$ in $\DS$, and
\[
\sigma_{u} = \sum_{u' < u} d_{u'}
\]
is the total number of arcs with tail strictly less than $u$.
\end{lemma}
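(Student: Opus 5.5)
The plan is to read off the labels on the out-arcs of $u$ directly from the lexicographic order of Definition~\ref{lexorder}, and then sum them as an arithmetic progression.

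The first step is to observe that the arcs leaving $u$ form a contiguous block in the ordering. Every arc whose tail $u'$ satisfies $u' < u$ precedes every arc with tail $u$. Every arc with tail $u' > u$ comes after all of them. Within tail $u$, ties are broken by head, so the $d_u$ out-arcs of $u$ occupy consecutive positions in the total order.

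The second step is to locate this block. The number of arcs with tail strictly less than $u$ is $\sum_{u'<u} d_{u'} = \sigma_u$. The labeling $f$ assigns $1,\ldots,m$ in order, so the out-arcs of $u$, sorted by head, receive exactly the labels
\[
\sigma_u + 1,\ \sigma_u + 2,\ \ldots,\ \sigma_u + d_u .
\]
One should check that $\DS$ has no multiple arcs, so that the head ordering within tail $u$ is a strict total order and no ambiguity arises. This holds because $\DS$ is a simple digraph defined by an adjacency relation on $S\times S$. Deleting the diagonal self-loops only lowers $d_u$ to $k-1$ at diagonal vertices and does not affect contiguity.

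The third step is the summation itself:
\[
S^{-}(u) = \sum_{j=1}^{d_u} (\sigma_u + j) = d_u\sigma_u + \frac{d_u(d_u+1)}{2}.
\]

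There is no real obstacle here. The only point needing care is the precise bookkeeping that the block of out-arcs of $u$ starts immediately after position $\sigma_u$. That is the definition of the lexicographic order together with the fact that $\sigma_u$ counts precisely the arcs with smaller tail, including the reduced degrees $k-1$ of diagonal vertices.
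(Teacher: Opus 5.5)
Your proposal is correct and follows the paper's own argument: both identify the out-arcs of $u$ as the contiguous block carrying labels $\sigma_u+1,\ldots,\sigma_u+d_u$ and then sum that arithmetic progression. Your remarks on contiguity in the lexicographic order and on the absence of multiple arcs spell out what the paper compresses into ``by construction of $f$.''
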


\begin{proof}
By construction of $f$, the $\du$ arcs with tail $u$ receive
consecutive labels $\{\sigma_{u} + 1,\sigma_u + 2,\ \ldots,\sigma_u + d_u\}$.
Their sum is
\[
\Sminus(u)
= \sum_{j=1}^{\du}(\sigma_u + j)
= d_u \sigma_u + \frac{\du(\du + 1)}{2}.
\qquad\square
\]
\end{proof}

\begin{lemma}[Degree Structure of $\DS$]
\label{degree}
In $\DS$:
\begin{enumerate}[label=\textup{(\roman*)}]
	\item The $k$ diagonal vertices $(i, i)$ have
	$d_{(i,i)} = k - 1$.
	\item All other $k^2 - k$ vertices have out-degree $d_u = k$.
	\item The same holds for in-degrees.
\end{enumerate}
\end{lemma}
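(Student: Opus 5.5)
The plan is to read off the out- and in-neighbourhoods of an arbitrary vertex directly from the defining rule of $\DSfull$, count them, and then determine exactly which of these arcs are self-loops. Throughout I will use the defining properties of $(S,*)$ from Section~\ref{sec:intro}: $x*0=0$ for all $x$, and for each $a,b\in S$ the equation $a*x=b$ has a unique solution $x$.

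\emph{Out-degrees in $\DSfull$.} Fix a vertex $(x_1,x_2)$. By the rule, its out-neighbours are:
\begin{itemize}
\item the $k-1$ vertices $(x_2,y_2)$ with $y_2\neq 0$;
\item the single vertex $(x_1*x_2,\,0)$.
\end{itemize}
These $k$ heads are pairwise distinct, since those of the first kind have nonzero second coordinate and the last has second coordinate $0$. Hence every vertex has out-degree exactly $k$ in $\DSfull$, which also recovers the count $k^3$ of arcs.

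\emph{In-degrees in $\DSfull$.} Fix a vertex $(y_1,y_2)$ and split into two cases.
\begin{itemize}
\item If $y_2\neq 0$, its in-neighbours are exactly the $k$ vertices $(x_1,y_1)$, $x_1\in S$.
\item If $y_2=0$, its in-neighbours are the $(x_1,x_2)$ with $x_1*x_2=y_1$. By the left-quasigroup property, for each $x_1$ there is exactly one such $x_2$, giving $k$ in-neighbours.
\end{itemize}
So every in-degree in $\DSfull$ is also $k$.

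\emph{Locating the self-loops.} This is the step needing care. An arc $(x_1,x_2)\to(x_1,x_2)$ arises in one of two ways.
\begin{itemize}
\item From the first kind of arc: $x_2=x_1$ and $x_2\neq 0$. This gives the loops at $(i,i)$ for $i\neq 0$.
\item From the second kind: $x_2=0$ and $x_1*0=x_1$. Since $x_1*0=0$, this forces $x_1=0$, giving the loop at $(0,0)$.
\end{itemize}
The two conditions are mutually exclusive, so each diagonal vertex $(i,i)$ carries exactly one self-loop and no off-diagonal vertex carries any. This confirms that $\DS$ is obtained by deleting precisely $k$ arcs.

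\emph{Conclusion.} Deleting a self-loop at a vertex lowers both its out-degree and its in-degree by exactly one, and affects no other vertex. Hence in $\DS$:
\begin{itemize}
\item each diagonal vertex $(i,i)$ has out- and in-degree $k-1$, proving (i) and the diagonal part of (iii);
\item the remaining $k^2-k$ vertices keep out- and in-degree $k$, proving (ii) and the rest of (iii).
\end{itemize}
The only point I expect to be delicate is the case $(0,0)$, where the loop comes from the ``$y_2=0$'' branch rather than the first branch. It is handled by the right-zero law $x*0=0$.
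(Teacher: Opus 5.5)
Your proof is correct and has the same skeleton as the paper's: show that $D_S$ is $k$-regular, then delete one loop at each diagonal vertex. The difference is that you do the work the paper outsources.

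The paper does not argue regularity itself. It invokes an external ``Regularity Lemma'', whose reference is left unresolved; this fact holds for every central digraph as a consequence of $A^2=J$. The paper then simply asserts that the deleted loops sit exactly one at each $(i,i)$ and nowhere else.

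You instead count both neighbourhoods directly from the arc rule, using the left-quasigroup property to get exactly $k$ in-neighbours of each $(y_1,0)$. You also locate the loops explicitly: the loop at $(0,0)$ arises from the second branch of the rule via $x*0=0$, while the loops at $(i,i)$, $i\neq 0$, arise from the first.

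The paper's route is shorter, and its regularity input is uniform over all central digraphs. Yours is self-contained and justifies the step the paper takes for granted. Regularity alone says nothing about where loops occur. It is the identification of the loop-carrying vertices with the diagonal $\{(i,i)\}$ that makes (i) and (ii) hold as stated.
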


\begin{proof}
By the Regularity Lemma (mention the lemma from the article laplacian), every vertex of $\DSfull$ has
in-degree $=$ out-degree $= k$.
The self-loop at $(i, i)$ contributes $1$ to both the in-count and
out-count of $(i, i)$; removing it reduces both to $k - 1$.
All non-diagonal vertices are unaffected.
\end{proof}
	
\begin{lemma}
	\label{offset}
	Under the vertex ordering $u = m_1 k + m_2$, the diagonal vertices
	occur at positions
	\[
	u = j(k+1), \qquad j = 0, 1, \ldots, k-1.
	\]
	Writing $D(u) = |\{j : j(k+1) < u\}|$ for the number of diagonal
	vertices strictly before $u$, the offset satisfies
	\[
	\sigma_u = k \cdot u - D(u).
	\]
\end{lemma}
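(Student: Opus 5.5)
The plan has two parts. First I locate the diagonal vertices: $(i,i)$ is encoded as $u = ik + i = i(k+1)$ for $i = 0,1,\ldots,k-1$. This is immediate from the encoding $u = l_1k + l_2$ with $l_1 = l_2 = i$. The integers $i(k+1)$ are distinct and increasing in $i$, so they give exactly $k$ positions. Conversely, if $u = l_1k+l_2$ with $0 \le l_1,l_2 \le k-1$, then uniqueness of base-$k$ representation shows that $(l_1,l_2)$ is diagonal exactly when $u$ has this form.

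Second, I compute $\sigma_u = \sum_{u'<u} d_{u'}$ directly from Lemma~\ref{degree}. There are exactly $u$ vertices $u' \in \{0,1,\ldots,u-1\}$ preceding $u$. Each has out-degree $k$, except the diagonal ones, which have out-degree $k-1$. Writing
\[
\sigma_u \;=\; \sum_{u'<u} k \;-\; \sum_{u'<u} \bigl(k - d_{u'}\bigr),
\]
the first sum equals $k\cdot u$. In the second sum, the summand $k - d_{u'}$ is $1$ when $u'$ is diagonal and $0$ otherwise, by Lemma~\ref{degree}(i)--(ii). So the second sum counts the diagonal vertices strictly below $u$, which is $D(u) = |\{j : j(k+1) < u\}|$ by the first part. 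This gives $\sigma_u = k u - D(u)$.

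There is no real obstacle here. The only point needing care is that $D(u)$ counts diagonal vertices \emph{strictly} before $u$, matching the strict inequality $u'<u$ in the definition of $\sigma_u$. In particular, when $u$ is itself diagonal, $u$ is not counted in $D(u)$; for example, $D(0) = 0$ and $\sigma_0 = 0$.

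As a consistency check, I take $u = k^2$, a virtual index one past the last vertex. Then $D(k^2) = k$ and $\sigma = k^3 - k = m$, which agrees with the arc count. If convenient, I would also record the closed form $D(u) = \lceil u/(k+1) \rceil$ for $0 \le u \le k^2$, since the later comparison of the oriented sums $w(u)$ will need it.
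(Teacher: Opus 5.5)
Your proof is correct and is essentially the paper's argument: the paper splits the $u$ predecessors into non-diagonal and diagonal vertices to get $\sigma_u = k(u-D(u)) + (k-1)D(u)$, which is the same count as your $ku - \sum_{u'<u}(k-d_{u'})$. Your added checks are also correct, and the paper does not spell them out: that the diagonal vertices sit exactly at $i(k+1)$, and that $D(u)=\lceil u/(k+1)\rceil$ for $0\le u\le k^2$, which agrees with Lemma~\ref{diagcount}.
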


\begin{proof}
	Each non-diagonal vertex $u'$ contributes $d_{u'} = k$ to $\sigma_u$,
	and each diagonal vertex contributes $d_{u'} = k - 1$, i.e.\
	exactly $1$ less than $k$ to $\sigma_u$.
	Hence
	\[
	\sigma_u = \sum_{u'=0}^{u-1} d_{u'} = k(u-D(u))+ (k-1)D(u) = ku - D(u).
	\qquad\square
	\]
\end{proof}

\begin{lemma}[Zero-Sum Identity]
	\label{lem:zerosum}
	Let $f \colon E(D_{S}) \to \{1, 2, \ldots, m\}$ be any bijective
	labeling of the arcs of $D_{S}$. Define the net vertex sum of
	each vertex $u$ as
	\[
	f(u) \;=\; f^{-}(u) - f^{+}(u),
	\]
	where $f^{+}(u) = \sum_{(u,v) \in E} f(u,v)$ is the out-sum
	and $f^{-}(u) = \sum_{(a,u) \in E} f(a,u)$ is the in-sum.
	Then
	\[
	\sum_{u \in V(D_{S})} f(u) \;=\; 0.
	\]
\end{lemma}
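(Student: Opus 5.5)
The argument is a double-counting of the arc labels; nothing about the loop structure or the particular labeling is used. I will rewrite the total $\sum_{u} f(u)$ as a sum over arcs instead of a sum over vertices. By definition,
\[
\sum_{u \in V(D_S)} f(u) \;=\; \sum_{u} f^{-}(u) \;-\; \sum_{u} f^{+}(u),
\]
and I will evaluate each of the two sums on the right separately.

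For the first sum, expand $f^{-}(u)=\sum_{(a,u)\in E} f(a,u)$ and interchange the order of summation. Every arc $x=(a,b)\in E$ occurs exactly once in $\sum_u f^{-}(u)$, namely in the term $u=b$ for its head. Hence $\sum_u f^{-}(u)=\sum_{x\in E} f(x)$.

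In the same way, each arc $(a,b)$ occurs exactly once in $\sum_u f^{+}(u)$, in the term for its tail $u=a$. So this sum also equals $\sum_{x\in E} f(x)$. Both totals are in fact $1+2+\cdots+m=m(m+1)/2$ because $f$ is a bijection, but we only need that they are equal. Subtracting gives $0$.

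The one point needing care is self-loops. The statement is phrased for $D_S$, which may carry loops at the diagonal vertices $(i,i)$. A loop $(i,i)\to(i,i)$ has the same head and tail, so its label is counted once in $f^{-}(i,i)$ and once in $f^{+}(i,i)$. Under the standard convention that a loop is both an in-arc and an out-arc of its vertex, it still contributes exactly once to each global sum and cancels. If we instead work with $D_S^*$, as in the rest of the paper, there are no loops and the argument applies verbatim.

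I do not expect any real obstacle. The only thing to get right is the bookkeeping that each arc has exactly one head and exactly one tail, which gives a bijection between arcs and incidence terms in each sum.
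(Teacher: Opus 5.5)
Your proof is correct and is essentially the paper's argument: both double-count by observing that each arc contributes its label exactly once to the out-sum of its tail and once to the in-sum of its head, so both global sums equal $\sum_{e} f(e)$ and cancel. Your remark on the self-loops of $D_S$ is a small refinement that the paper does not spell out.
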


\begin{proof}
	Expanding the sum:
	\[
	\sum_{u} f(u)
	\;=\; \sum_{u} f^{-}(u) - \sum_{u} f^{+}(u).
	\]
	Each arc $(a \to b) \in E(D_{S})$ contributes its label
	$f(a \to b)$ to $f^{+}(a)$ exactly once (as an arc leaving $a$)
	and to $f^{-}(b)$ exactly once (as an arc entering $b$).
	Therefore, as $u$ ranges over all vertices:
	\[
	\sum_{u} f^{+}(u)
	\;=\; \sum_{e \in E(D_{S})} f(e)
	\;=\; \frac{m(m+1)}{2}
	\;=\; \sum_{u} f^{-}(u).
	\]
	Subtracting gives $\sum_{u} f(u) = 0$.
\end{proof}

\begin{definition}[Iverson Bracket]
	Let $P$ be any logical proposition or condition. The 
	\textbf{Iverson bracket}, denoted by $[P]$, is a function 
	that assigns a numerical value to $P$ depending on its 
	truth value, defined as:
	
	\[
	[P] = 
	\begin{cases} 
		1 & \text{if } P \text{ is true} \\ 
		0 & \text{if } P \text{ is false}
	\end{cases}
	\]
	
	In other words, it simply returns $1$ when the given 
	condition holds, and $0$ when it does not.
\end{definition}
\begin{lemma}[Diagonal count]
	\label{diagcount}
	For $u = lk + r$ with $0\le l,r\le k-1$,
	\[
	D(lk+r) \;=\; l + \iv{r > l},
	\]
	where $\iv{\cdot}$ denotes the Iverson bracket.
\end{lemma}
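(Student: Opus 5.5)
We count directly, using the description of the diagonal positions from Lemma~\ref{offset}. By that lemma the diagonal vertices are exactly the integers $j(k+1)$ for $j = 0, 1, \ldots, k-1$. Hence
\[
D(u) = \bigl|\{\, j \in \{0,\ldots,k-1\} : j(k+1) < u \,\}\bigr| .
\]
Write $u = lk + r$ with $0 \le l, r \le k-1$. Rewrite each diagonal position as $j(k+1) = jk + j$, which is the encoding of the pair $(j,j)$. Comparing $jk + j$ with $lk + r$ is then a lexicographic comparison of $(j,j)$ with $(l,r)$, because the encoding $u = l_1 k + l_2$ with digits in $\{0,\ldots,k-1\}$ is order-preserving for the dictionary order on pairs.

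Next split the index $j$ into three ranges.

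If $j < l$, then $jk + j \le jk + (k-1) < (j+1)k \le lk \le lk + r$. So every such $j$ is counted, giving exactly $l$ diagonal vertices.

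If $j > l$, then $jk + j \ge (l+1)k > lk + (k-1) \ge lk + r$. So no such $j$ is counted.

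If $j = l$, the comparison reduces to $lk + l < lk + r$, i.e.\ $l < r$. This contributes exactly $\iv{r > l}$.

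Adding the three contributions gives $D(lk+r) = l + \iv{r > l}$, as claimed.

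The only point needing care is the strict inequality in the definition of $D(u)$. When $r = l$, the vertex $u$ is itself the diagonal vertex $(l,l)$ and must not be counted. The Iverson bracket correctly returns $0$ in that case, consistent with Lemma~\ref{offset}, where $D(u)$ counts only diagonal vertices strictly before $u$. No other step presents a real obstacle; the argument is just base-$k$ digit comparison.
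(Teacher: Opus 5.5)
Your proof is correct and is essentially the paper's argument. Both proofs count the $j$ with $j(k+1) < lk+r$ by showing that every $j<l$ qualifies, that no $j>l$ does, and that $j=l$ qualifies exactly when $r>l$.

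The only difference is the bookkeeping. The paper rewrites $u = l(k+1)+(r-l)$ and uses $|r-l| < k+1$, phrased as a ``fractional part'' of $u/(k+1)$. It spells this out only for $r>l$. You instead write $j(k+1)=jk+j$ and compare base-$k$ digits, which makes the cases $r<l$ and $j>l$ fully explicit.
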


\begin{proof}
	The diagonal vertices occupy positions $j(k+1)$ for $j=0,\ldots,k-1$.
	Writing $lk+r = l(k+1)+(r-l)$, the inequality $j(k+1) < l(k+1)+(r-l)$
	holds for $j \le l$ when $r > l$ (since $0 < r-l < k$ yields a positive
	fractional part less than~$1$), and for $j\le l-1$ when $r\le l$. This
	gives $D = l+1$ in the first case and $D = l$ in the second, which is
	precisely $l + \iv{r>l}$.
\end{proof}

\begin{corollary} \label{offset2}
	$\sigma(lk+r) \;=\; (k^{2}-1)\,l + kr - \iv{r>l}.$
\end{corollary}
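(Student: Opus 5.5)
The corollary follows by substituting Lemma~\ref{diagcount} into Lemma~\ref{offset}. Lemma~\ref{offset} gives $\sigma_u = k\,u - D(u)$ for every vertex $u$, where $D(u)$ counts the diagonal vertices strictly preceding $u$ in the integer order. I will write $u = lk + r$ with $0 \le l, r \le k-1$; every vertex of $\DS$ has a unique representation of this form, namely $(l,r) \in S \times S$.

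First I compute $k\,u = k(lk+r) = k^2 l + kr$. Next I replace $D(lk+r)$ by $l + \iv{r>l}$, as Lemma~\ref{diagcount} permits. Subtracting gives
\[
\sigma(lk+r) \;=\; k^2 l + kr - l - \iv{r>l} \;=\; (k^2-1)\,l + kr - \iv{r>l}.
\]
The only rearrangement needed is to collect $k^2 l - l$ into $(k^2-1)l$.

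There is no real obstacle here. The one point to check is that the hypotheses of Lemma~\ref{diagcount} ($0 \le l, r \le k-1$) match the encoding used in Lemma~\ref{offset}, which they do by construction.

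As a sanity check I would test the boundary cases. For $u = 0$ (so $l = r = 0$) the formula gives $0$, which is correct because no arcs precede the first vertex. For $u = 1$ (so $l = 0$, $r = 1$) it gives $k - 1$, which equals the out-degree of the diagonal vertex $(0,0)$, as it should.
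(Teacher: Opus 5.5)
Your proposal is correct and matches the paper's proof: both substitute $D(lk+r) = l + \iv{r>l}$ from Lemma~\ref{diagcount} into $\sigma_u = ku - D(u)$ from Lemma~\ref{offset} and collect $k^2 l - l$ into $(k^2-1)l$. Your boundary checks at $u=0$ and $u=1$ are a sound addition.
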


\begin{proof}
Using  Lemma, \ref{offset} and \ref{diagcount}, we have

	$\sig(lk+r) = k(lk+r) - l - \iv{r>l} = k^{2}l + kr - l - \iv{r>l}
	= (k^{2}-1)l + kr - \iv{r>l}$.
\end{proof}

\begin{lemma}
	\label{colsum}
Let for $a\in S$, 
\[
C(a) \;\coloneqq\; \sum_{l=0}^{k-1} \sig(lk+a).
\]. Then 
	$ C(a) \;=\; (k^{2}-1)\!\left[\frac{k(k-1)}{2} + a\right].$
\end{lemma}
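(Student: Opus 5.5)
The plan is to substitute the closed form from Corollary~\ref{offset2} directly into the definition of $C(a)$ and sum term by term over $l = 0, 1, \ldots, k-1$. With $r = a$ fixed, the corollary gives
\[
\sig(lk+a) \;=\; (k^{2}-1)\,l + ka - \iv{a>l}.
\]
Since $0 \le a \le k-1$, the vertex index $lk+a$ satisfies the hypotheses $0\le l, r\le k-1$ of Lemma~\ref{diagcount}, so the corollary applies for every $l$ in the range.

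I will split the sum into three pieces and evaluate each separately:
\[
C(a) \;=\; (k^{2}-1)\sum_{l=0}^{k-1} l \;+\; \sum_{l=0}^{k-1} ka \;-\; \sum_{l=0}^{k-1}\iv{a>l}.
\]
\begin{itemize}
\item The first piece is $(k^{2}-1)\,\frac{k(k-1)}{2}$, by the arithmetic series formula.
\item The second piece is $k\cdot ka = k^{2}a$, since the summand does not depend on $l$.
\item The third piece counts the indices $l \in \{0,\ldots,k-1\}$ with $l < a$, which are exactly $l = 0,1,\ldots,a-1$. Because $a \le k-1$, all of these lie in the range, so the count is exactly $a$.
\end{itemize}

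Combining the second and third pieces gives $k^{2}a - a = (k^{2}-1)a$. Factoring out $k^{2}-1$ from the total then yields
\[
C(a) \;=\; (k^{2}-1)\left[\frac{k(k-1)}{2} + a\right],
\]
as claimed.

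The only point needing care is the Iverson-bracket count in the third piece. One must check that the set $\{l : l < a\}$ is not truncated by the upper limit $k-1$, which holds because $a \in S$ forces $a \le k-1$. It is precisely this count equalling $a$ that lets the $-a$ merge with $k^{2}a$ into a multiple of $k^{2}-1$. Everything else is routine summation.
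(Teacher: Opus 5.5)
Your proof is correct and matches the paper's argument: substitute Corollary~\ref{offset2}, split the sum into the arithmetic series, the constant term $k^{2}a$, and the Iverson count $a$, then factor out $k^{2}-1$. Your remark that $a\le k-1$ keeps the count $\sum_{l}\iv{a>l}$ from being truncated makes explicit a detail the paper leaves implicit.
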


\begin{proof}
By Corollary~\ref{offset2},
	$\sig(lk+a) = (k^{2}-1)l + ka - \iv{a>l}$.
	Summing over $l=0,\ldots,k-1$:
	\begin{align*}
		C(a)
		&= (k^{2}-1)\!\sum_{l=0}^{k-1} l \;+\; k^{2}a \;-\; \sum_{l=0}^{k-1}\iv{a>l}\\
		&= (k^{2}-1)\cdot\frac{k(k-1)}{2} + k^{2}a - a,
	\end{align*}
	where $\sum_{l=0}^{k-1}\iv{a>l} = a$ (counting $l=0,\ldots,a-1$).
	Factoring the last two terms gives the result.
\end{proof}

\begin{proposition}[In-sum]
	\label{prop:insum}
	\begin{enumerate}[label=(\alph*), nosep]
		\item For a non-diagonal vertex $u=(a,b)$, $b\ne a$:
		\[
		\Sin(u) \;=\; C(a) + k(b+1) - \iv{b>a}.
		\]
		\item For the diagonal vertex $u=(a,a)$:
		\[
		\Sin\!\bigl((a,a)\bigr) \;=\; C(a) - \sig\!\bigl(a(k+1)\bigr) + (k-1)(a+1).
		\]
	\end{enumerate}
\end{proposition}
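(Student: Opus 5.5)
The plan is to compute the in-sum arc by arc. First identify the in-neighbourhood of each vertex. Then, for each arc, find the rank of its head among the out-neighbours of its tail. Finally add up, using Corollary~\ref{offset2} and Lemma~\ref{colsum}.

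\emph{Step 1 (neighbourhoods).} From the defining rule of $\DSfull$, I would show that for every tail $(l,a)$ the $k$ out-neighbours are exactly $(a,0),(a,1),\dots,(a,k-1)$. In particular they are the vertices with first coordinate $a$, so the heads $ak+j$ are consecutive integers increasing in $j$. Dually, the in-neighbours of $(a,b)$ are exactly $(l,a)$ for $l=0,\dots,k-1$. In $\DS$ the only change is that the self-loop at $(a,a)$ is deleted. For this step I would use the defining rule together with the Regularity Lemma and the left-quasigroup/right-zero structure of $(S,*)$ from the loop $(T,\circ)$. The point to check is that the heads with $y_2=0$ still have first coordinate $a$, since for those heads the second clause of the rule (via $*$) applies rather than $x_2=y_1$. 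I expect this to be the main obstacle. My first attempt would be a direct check from the rule using $0*x=x$ and $x*0=0$, rather than an indirect one.

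\emph{Step 2 (label of a single arc).} By construction of $f$, the arcs leaving $u'=lk+a$ carry the labels $\sig(u')+1,\dots,\sig(u')+d_{u'}$ in increasing order of head (Lemma~\ref{outsum}). Hence $f\bigl((l,a)\to(a,b)\bigr)=\sig(lk+a)+\rho$, where $\rho$ is the rank of $(a,b)$ among the heads from $(l,a)$. If $l\neq a$, then $(l,a)$ is non-diagonal and has all $k$ heads, so $\rho=b+1$. If $l=a$, the tail is diagonal and the head $(a,a)$ is missing. For $b\neq a$ this gives $\rho=b+1-\iv{b>a}$, and for $b=a$ the arc does not exist.

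\emph{Step 3 (summation).} For (a), with $b\neq a$, sum over all $l$:
\[
\Sin((a,b))=\sum_{l=0}^{k-1}\sig(lk+a)+k(b+1)-\iv{b>a}=C(a)+k(b+1)-\iv{b>a},
\]
using the definition of $C(a)$ from Lemma~\ref{colsum}. For (b), the term $l=a$ is absent. Each of the remaining $k-1$ tails contributes $\sig(lk+a)+(a+1)$, which gives $C(a)-\sig(a(k+1))+(k-1)(a+1)$.
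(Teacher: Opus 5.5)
Your Steps 2 and 3 are exactly the paper's argument. Its proof also takes the label of $(l,a)\to(a,b)$ to be $\sig(lk+a)+(b+1)$ for $l\neq a$, and $\sig(a(k+1))+(b+1)-\iv{b>a}$ for $l=a$, and then sums using $C(a)$. The paper never proves your Step 1. It simply assumes, as its worked example states, that the out-neighbours of $(l_1,l_2)$ are $(l_2,c)$, $c\in S$. You are right that this is the crux, but the direct check you propose cannot succeed, because the claim is false in general.

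The identities $0*x=x$ and $x*0=0$ give $l*a=a$ only when $l=0$ or $a=0$. For $l,a\neq 0$, the second clause makes the unique head of $(l,a)$ with second coordinate $0$ equal to $(l*a,0)$. Here $l*a$ is the value at $a$ of the permutation $x\mapsto l*x$ of $S\setminus\{0\}$, and nothing forces it to be $a$. Requiring $l*a=a$ for all $l,a$ means $*$ is the right projection $x*y=y$, and then the loop plays no role at all.

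Otherwise the neighbourhoods change in two ways:
\begin{itemize}
\item The in-neighbours of $(a,0)$ are the vertices $(l,x)$ with $l*x=a$, as Step~3 of the paper's main proof itself says, not the vertices $(l,a)$.
\item For $b\neq 0$, the rank of $(a,b)$ among the heads of $(l,a)$ is $b+\iv{l*a\le a}$ before the diagonal correction, not $b+1$.
\end{itemize}

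The formulas then genuinely fail. Take $k=3$ with $1*1=2$, $1*2=1$, $2*y=y$, together with $0*x=x$ and $x*0=0$; this is a valid left quasigroup with left identity and right zero $0$. The arcs $(1,1)\to(2,0)$ and $(1,2)\to(1,0)$ replace $(1,1)\to(1,0)$ and $(1,2)\to(2,0)$. The lexicographic labeling then gives $\Sin((1,0))=3+14+20=37$ and $\Sin((1,2))=5+12+22=39$. Part (a) predicts $C(1)+3=35$ and $C(1)+9-1=40$.

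So your argument, like the paper's, proves the proposition only for the digraph $(l_1,l_2)\to(l_2,c)$, i.e.\ only when $x*y=y$. For a genuine loop you would have to redo the count with the extra term $\iv{l*a\le a}$ and with the twisted in-neighbourhoods of the vertices $(a,0)$.
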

\begin{proof}
	\textbf{(a)} Summing the labels of the $k$ in-arcs: each source $v_{l}$
	with $l\ne a$ contributes $\sig(lk+a)+(b+1)$ by~(i), and the diagonal
	source $v_{a}$ contributes $\sig(a(k+1))+(b+1)-\iv{b>a}$ by~(ii).
	Summing over all $k$ values of $l$ gives $C(a)+k(b+1)-\iv{b>a}$.
	
	\textbf{(b)} Only $k-1$ in-arcs exist ($l\ne a$); each is a non-diagonal
	source contributing $\sig(lk+a)+(a+1)$ by~(i). Summing:
	$\bigl[C(a)-\sig(a(k+1))\bigr]+(k-1)(a+1)$.
\end{proof}

We can now derive explicit weight formulas by subtracting
$S^{-}(u)$ (Lemma \ref{outsum}) from $S^{+}(u)$ (Proposition~\ref{prop:insum}).

\begin{proposition}[Weight  non-diagonal vertex]
	\label{prop:wnd}
	For $b\ne a$,
	\[
	w(a,b) \;=\; \OmegaW
	- (k-1)^{2}(k+1)\cdot a
	- k(k-1)\cdot b
	+ (k-1)\cdot\iv{b>a},
	\]
	where $\OmegaW = (k^{2}-1)\tfrac{k(k-1)}{2} + k - \tfrac{k(k+1)}{2}$
	is a constant depending only on~$k$.
\end{proposition}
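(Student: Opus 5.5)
The plan is a direct substitution: write $w(a,b)=\Sin(a,b)-\Sout(a,b)$, plug in the closed forms already established, and collect terms by $a$, by $b$, by the Iverson bracket $\iv{b>a}$, and by the remaining constant. Throughout, $u=(a,b)$ with $b\ne a$ is encoded as $u=ak+b$. By Lemma~\ref{degree} it has out-degree $\du=k$.

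\textbf{Out-sum.} By Lemma~\ref{outsum},
\[
\Sout(u)=k\,\sig_u+\tfrac{k(k+1)}{2}.
\]
Corollary~\ref{offset2}, with $l=a$ and $r=b$, gives $\sig_u=(k^2-1)a+kb-\iv{b>a}$. Hence
\[
\Sout(u)=k(k^2-1)a+k^2b-k\,\iv{b>a}+\tfrac{k(k+1)}{2}.
\]

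\textbf{In-sum.} By Proposition~\ref{prop:insum}(a) together with Lemma~\ref{colsum},
\[
\Sin(u)=(k^2-1)\tfrac{k(k-1)}{2}+(k^2-1)a+kb+k-\iv{b>a}.
\]

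\textbf{Subtraction and collection.} Subtracting the two expressions and grouping gives the four pieces of the claimed formula.
\begin{itemize}
\item The coefficient of $a$ is $(k^2-1)-k(k^2-1)=-(k-1)(k^2-1)=-(k-1)^2(k+1)$.
\item The coefficient of $b$ is $k-k^2=-k(k-1)$.
\item The coefficient of $\iv{b>a}$ is $-1+k=k-1$.
\item The constant is $(k^2-1)\tfrac{k(k-1)}{2}+k-\tfrac{k(k+1)}{2}$, which is exactly $\OmegaW$.
\end{itemize}
This yields the stated formula for $w(a,b)$.

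\textbf{Main obstacle.} The algebra itself is routine. The only step needing care is keeping the two occurrences of the bracket consistent. In $\Sout$ it enters as $-k\,\iv{b>a}$, through $k\,\sig_u$. In $\Sin$ it enters as $-\iv{b>a}$, coming only from the single diagonal in-neighbour. The net coefficient $k-1$ depends on this sign bookkeeping, so I would double-check it against a small case, for example $k=3$ with the vertex $(0,1)$.
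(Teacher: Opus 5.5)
Your proposal is correct and is essentially the paper's own proof: it also substitutes $d_u=k$ and Corollary~\ref{offset2} into Lemma~\ref{outsum}, combines Proposition~\ref{prop:insum}(a) with Lemma~\ref{colsum}, subtracts, and collects the coefficients of $a$, $b$, $\iv{b>a}$ and the constant, obtaining exactly the values you list.

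One caveat applies equally to your argument and the paper's. Both rest on Proposition~\ref{prop:insum}(a), whose proof tacitly takes the out-neighbours of $(l,a)$ to be $(a,0),\ldots,(a,k-1)$ in order, as in the worked example. Under the paper's general arc rule the zero-column out-neighbour of $(l,a)$ is $(l*a,0)$ instead, so when $*$ is not the right projection $x*y=y$ the in-sums in rows $a\neq 0$ can differ. For instance, with $k=3$, $1*y=y$, $2*1=2$ and $2*2=1$, one gets $w(1,0)=8$ rather than the formula's $5$.
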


\begin{proof}
	Substitute $d_{u}=k$ into Lemma~ \ref{outsum} and use Corollary~\ref{offset2}:
	\begin{align*}
		\Sout(u)
		&= k\bigl[(k^{2}-1)a + kb - \iv{b>a}\bigr] + \tfrac{k(k+1)}{2}\\
		&= k^{3}a + k^{2}b - ka - k\iv{b>a} + \tfrac{k(k+1)}{2}.
	\end{align*}
	From Proposition~\ref{prop:insum}(a) and Lemma~\ref{colsum}:
	\[
	\Sin(u) = (k^{2}-1)\!\left[\tfrac{k(k-1)}{2}+a\right]
	+ k(b+1) - \iv{b>a}.
	\]
	Computing $w = \Sin(u) - \Sout(u)$ and collecting by variable:
	\begin{align*}
		\text{coefficient of } a
		&: (k^{2}-1) - k^{3} + k = -(k-1)^{2}(k+1),\\
		\text{coefficient of } b
		&: k - k^{2} = -k(k-1),\\
		\text{coefficient of } \iv{b>a}
		&: -1 + k = k-1,\\
		\text{constant}
		&: (k^{2}-1)\tfrac{k(k-1)}{2} + k - \tfrac{k(k+1)}{2} = \OmegaW. \qedhere
	\end{align*}
\end{proof}

\begin{proposition}[Weight diagonal vertex]
	\label{prop:wdiag}
	For the diagonal vertex $u=(a,a)$,
	\[
	w(a,a) \;=\; \wnd(a,a) \;+\; (k-1-a),
	\]
	where $\wnd(a,a)$ denotes the value Proposition~\ref{prop:wnd} would
	assign to~$(a,a)$ (with $\iv{a>a}=0$).
\end{proposition}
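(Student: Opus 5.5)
The plan is a direct computation: write both $w(a,a)$ and $\wnd(a,a)$ as an in-sum minus an out-sum expressed through the common quantities $C(a)$ and $\sig(a(k+1))$, then subtract. The key observation is that the diagonal vertex and the hypothetical non-diagonal vertex $(a,a)$ share the same offset $\sig_u$. By Corollary~\ref{offset2} with $l=r=a$ and $\iv{a>a}=0$, this offset is $\sig := \sig(a(k+1)) = (k^2-1)a + ka$. So everything reduces to comparing degrees and the small additive constants.

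\textbf{Step 1 (the true weight).} By Lemma~\ref{degree}, $d_{(a,a)} = k-1$. Lemma~\ref{outsum} then gives
\[
\Sout\bigl((a,a)\bigr) = (k-1)\sig + \tfrac{(k-1)k}{2},
\]
and Proposition~\ref{prop:insum}(b) gives
\[
\Sin\bigl((a,a)\bigr) = C(a) - \sig + (k-1)(a+1).
\]
Hence
\[
w(a,a) = C(a) + (k-1)(a+1) - k\sig - \tfrac{k(k-1)}{2}.
\]

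\textbf{Step 2 (the hypothetical non-diagonal weight).} I will not expand the closed form of Proposition~\ref{prop:wnd}. Instead I use its derivation: $\wnd(a,a)$ is $\Sin - \Sout$ computed with $d_u=k$ and the in-sum formula of Proposition~\ref{prop:insum}(a), both specialised to $b=a$. This gives
\[
\wnd(a,a) = \bigl[C(a) + k(a+1)\bigr] - \bigl[k\sig + \tfrac{k(k+1)}{2}\bigr].
\]
That this agrees with the closed formula of Proposition~\ref{prop:wnd} at $b=a$ follows because that proof is a pure algebraic identity in $a,b$ and $\iv{b>a}$, so it remains valid when $b=a$ is substituted.

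\textbf{Step 3 (subtract).} In the difference, the terms $C(a)$ and $-k\sig$ cancel, leaving
\[
w(a,a) - \wnd(a,a) = (k-1)(a+1) - k(a+1) - \tfrac{k(k-1)}{2} + \tfrac{k(k+1)}{2} = -(a+1) + k = k-1-a.
\]
This is the claim.

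The only delicate point is Step 2, namely justifying that ``the value Proposition~\ref{prop:wnd} would assign'' equals the unexpanded expression above. I will handle this by noting that every ingredient of the proof of Proposition~\ref{prop:wnd} (Corollary~\ref{offset2}, Lemma~\ref{colsum}, and the coefficient bookkeeping) is a polynomial identity in $a$, $b$ and $\iv{b>a}$, valid for $b=a$. As a sanity check, I will verify the result at $a=0$ for a small case such as $k=3$.
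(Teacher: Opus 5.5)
Your proof is correct and is essentially the paper's argument: the paper likewise compares the actual and fictitious in-sums and out-sums at $(a,a)$ via Proposition~\ref{prop:insum} and Lemma~\ref{outsum}, getting $\bigl[-\sig(a(k+1))-(a+1)\bigr]-\bigl[-\sig(a(k+1))-k\bigr]=k-1-a$. That is your Step~3, with the terms grouped by in-sum and out-sum rather than vertex by vertex, and your remark that the closed form of Proposition~\ref{prop:wnd} stays valid at $b=a$ makes explicit a point the paper leaves implicit.
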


\begin{proof}
	We compare the actual in-sum and out-sum at $(a,a)$ with the fictitious
	non-diagonal values.  The degree drops from $k$ to $k-1$, causing:
	\[
	\bigl[\Sin_{\mathrm{act}}(a,a) - \Sin_{\mathrm{nd}}(a,a)\bigr]
	- \bigl[\Sout_{\mathrm{act}}(a,a) - \Sout_{\mathrm{nd}}(a,a)\bigr]
	= \bigl[-\sig(a(k+1)) - (a+1)\bigr]
	- \bigl[-\sig(a(k+1)) - k\bigr]
	= k - 1 - a.
	\]
	Hence $w(a,a) = \wnd(a,a) + (k-1-a)$.
\end{proof}

\section{Proof of Main Theorem \ref{thm:main}}
	
\begin{proof}
We must show the oriented weights $w(u) = \Splus(u) - \Sminus(u)$
are pairwise distinct for all $n = k^2$ vertices. We prove the result through sequence to steps.

\medskip
\noindent\textbf{Step 1 (Out-sums are loop invariant)}
By Lemma~\ref{outsum},
\[
\Sminus(u) = d_u \sigma_u + \frac{d_u(d_u + 1)}{2},
\]
where $d_u \in \{k-1, k\}$ (Lemma~\ref{degree}) and
$\sigma_u = ku - D(u)$ (Lemma~\ref{offset}) depend only on $k$
and the vertex index $u$, but not on the loop structure.
Hence, value of $\Sminus(u)$ remains same for every loop of order $k$.

\medskip
\noindent\textbf{Step 2 (Out-sums are pairwise distinct.)}
We show $\Sminus(u) \neq \Sminus(v)$ for all $u \neq v$. It is sufficient to show the $n$ values $\Sminus(0), \Sminus(1), \ldots, \Sminus(n-1)$ are all different. Using Lemma~\ref{outsum} and Lemma~\ref{offset}, the difference between consecutive out-sums is
\begin{align}
	\Sminus(u+1) - \Sminus(u)
	&= d_{u+1}(\sigma_u + d_u) + \frac{d_{u+1}(d_{u+1}+1)}{2}
	- d_u \sigma_u - \frac{d_u(d_u+1)}{2} \nonumber \\
	&= (d_{u+1} - d_u)\sigma_u + d_{u+1}d_u
	+ \frac{d_{u+1}(d_{u+1}+1) - d_u(d_u+1)}{2}.
	\label{diff}
\end{align}

Since $d_u, d_{u+1} \in \{k-1, k\}$, we consider three different cases.

\smallskip
\noindent\emph{Case~1: $u+1$ is non-diagonal, $u$ is non-diagonal ($d_{u+1} = d_u = k$).} Equation~\eqref{diff} gives
\[
\Sminus(u+1) - \Sminus(u) = 0 \cdot \sigma_u + k^2 + 0 = k^2 > 0.
\]

\smallskip
\noindent\emph{Case~2: $u+1$ is diagonal, $u$ is non-diagonal
	($d_{u+1} = k-1$, $d_u = k$).}
Then, $u+1 = i(k+1)$, so $u = i(k+1) - 1$ and $D(u) = i - 1$ (for $i \geq 1$).
Hence
\[
\sigma_u = k\bigl(i(k+1) - 1\bigr) - (i-1)
= i(k+1)(k) - k - i + 1
= ik(k+1) - (k + i - 1).
\]
Substituting into the equation~\eqref{diff} we get:
\[
\Sminus(u+1) - \Sminus(u)
= -\sigma_u + k(k-1) - k
= -\sigma_u + k^2 - 2k.
\]
For $i = 0$ (the vertex $u+1 = 0$), the diagonal is the very first
vertex, so $\Sminus(0) = 2 \cdot 0 + \frac{2 \cdot 3}{2} = 3 > 0$
trivially.
For $i \geq 1$, substituting $\sigma_u = ik(k+1) - (k+i-1)$:
\[
\Sminus(u+1) - \Sminus(u)
= k^2 - 2k - ik(k+1) + k + i - 1
= k^2 - k - 1 - ik(k+1) + i.
\]

Now, for all $i \geq 1, k \geq 3$
\begin{align*}
	i(k^{2} + k - 1)
	&\; \geq (k^{2} + k - 1) 
	\qquad \\
	&\; > (k^{2} - k - 1) 
\end{align*}	

Therefore,
\begin{align}
 \Sminus(u+1) - \Sminus(u) = (k^2 - k-1)-i(k^{2} + k - 1)
&\; < (k^{2} - k - 1) - (k^{2} - k - 1) 
\qquad \\
&\;=\; 0 	
\end{align}

\medskip
\noindent\emph{Case~3: $u$ is diagonal, $u+1$ is non-diagonal ($d_u = k-1$, $d_{u+1} = k$).} Equation~\eqref{diff} gives
\[
\Sminus(u+1) - \Sminus(u)
= \sigma_u + k(k+1) > 0,
\]
since $B_u \geq 0$ always.

In all three cases $\Sminus(u+1) \neq \Sminus(u)$, and since
$\Sminus$ is determined solely by the degree sequence (Step~1),
the values $\Sminus(0), \ldots, \Sminus(n-1)$ are pairwise distinct.

\medskip
\noindent\textbf{Step 3 (Structure of the in-sums.)}
Fix vertex $u = (a, b) \in S \times S$.
By the edge rule of the construction:
\begin{itemize}
	\item If $b \neq 0$: the in-arcs of $u$ come from all vertices
	$(l_1, a)$, $l_1 \in S$, via first edge rule.
	Each such vertex $l_1 k + a$ contributes one arc-label
	from its own out-arc block, so
	\[
	\Splus((a,b))
	= \sum_{l_1=0}^{k-1} f\bigl((l_1 k + a)
	\to (ak + b)\bigr),
	\]
	a sum of exactly $k$ labels drawn from $k$ distinct blocks.
	\item If $b = 0$: the in-arcs come from vertices $(l_1, l_2)$
	with $l_1 * l_2 = a$, $l_2 \neq 0$, via second
	edge rule, together with the arc from $(l_1, 0)$ for
	$l_1 * 0 = 0$ when $a = 0$.
\end{itemize}
In both cases $\Splus(u)$ is a sum of exactly $d_u$ labels,
one from each in-arc's block.

\medskip
\noindent\textbf{Step 4 (Distinctness of the weights.)} The vertex sequence visits $(0,0),(0,1),\ldots,(0,k-1),(1,0),\ldots,
(k-1,k-1)$ row by row. Every consecutive pair $u, u+1$ falls into
exactly one of the four cases below, determined by the diagonal status of
$u$ and $u+1$ and whether a row boundary is crossed.

\medskip
\noindent\textbf{Case~1} (Both non-diagonal, same row).
$u = (a,b) \to u+1 = (a,b+1)$ with $b\ne a$ and $b+1\ne a$.

By Proposition~\ref{prop:wnd}, only the $b$-term and the Iverson bracket
differ between $w(a,b+1)$ and $w(a,b)$. Since neither $b$ nor $b+1$
equals $a$, the transition $b\to b+1$ does not cross the threshold $a$:
$\iv{b+1>a} - \iv{b>a} = 0$. Therefore,
\[
w(u+1) - w(u) \;=\; -k(k-1).
\]

\medskip

\noindent\textbf{Case~2} (Non-diagonal $\to$ Diagonal).
$u = (a,a-1) \to u+1 = (a,a)$, where $a\ge 1$.

Vertex $u=(a,a-1)$ is non-diagonal with $b = a-1 < a$, so
$\iv{b>a}=0$. By Proposition~\ref{prop:wdiag}, $w(a,a) = \wnd(a,a) +
(k-1-a)$, where $\wnd(a,a)$ uses $b=a$ and $\iv{a>a}=0$.
The fictitious non-diagonal step $(a,a-1)\to(a,a)$ would give $-k(k-1)$
(both brackets vanish, as in Case~1). Adding the diagonal correction:
\[
w(u+1) - w(u)
\;=\; -k(k-1) + (k-1-a)
\;=\; -(k-1)^{2} - a.
\]
For $a\ge 1$, this is at most $-(k-1)^{2}-1 \le -5 < 0$.

\medskip

\noindent\textbf{Case~3} (Diagonal $\to$ Non-diagonal, same row).
$u = (a,a) \to u+1 = (a,a+1)$, where $a+1\le k-1$.

The non-diagonal weight at $(a,a+1)$ uses $b=a+1 > a$, so
$\iv{b>a}=1$. The fictitious non-diagonal step $(a,a)\to(a,a+1)$ gives
$-k(k-1)+(k-1)(1-0) = -(k-1)^{2}$. Since $w(a,a)$ exceeds $\wnd(a,a)$
by $k-1-a$ (Proposition~\ref{prop:wdiag}), the actual difference is:
\[
w(u+1) - w(u)
\;=\; -(k-1)^{2} - (k-1-a)
\;=\; a - k(k-1).
\]
For $0\le a\le k-2$, this is at most $(k-2)-k(k-1) = -(k-1)(k-2)\le -2 < 0$.

\medskip
\noindent\textbf{Case~4} (Row change).
$u = (a,k-1) \to u+1 = (a+1,0)$, where $a\le k-2$.

Both vertices are non-diagonal ($k-1\ne a$ since $a\le k-2$; $0\ne a+1$
since $a+1\ge 1$). Applying Proposition~\ref{prop:wnd} directly with
$\iv{k-1 > a}=1$ (since $a\le k-2$) and $\iv{0 > a+1}=0$:
\begin{align*}
	w(a+1,0)   &= \OmegaW - (k-1)^{2}(k+1)(a+1), \\
	w(a,k-1)   &= \OmegaW - (k-1)^{2}(k+1)a - k(k-1)(k-1) + (k-1).
\end{align*}
Subtracting:
\begin{align*}
	w(u+1) - w(u)
	&= -(k-1)^{2}(k+1) + k(k-1)^{2} - (k-1)\\
	&= (k-1)\bigl[-(k-1)(k+1) + k(k-1) - 1\bigr]\\
	&= (k-1)\bigl[(k-1)(-k-1+k) - 1\bigr]\\
	&= (k-1)\bigl[-(k-1) - 1\bigr]
	= -k(k-1).
\end{align*}

%
%
\end{proof}

	\section{Consequences}
	
	\begin{corollary}[Loop-Invariance]
		\label{cor:invariant}
		The antimagic property of $\DS$ depends only on $k$, not on the
		isomorphism class of the loop.
		If $(T_1, \circ_1)$ and $(T_2, \circ_2)$ are non-isomorphic loops
		of order $k$, then both $D_{S_1}^*$ and $D_{S_2}^*$ are antimagic
		under the same explicit labeling $f$.
	\end{corollary}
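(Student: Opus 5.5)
Corollary~\ref{cor:invariant} reduces to showing that every ingredient of the antimagic argument for Theorem~\ref{thm:main} is independent of the loop operation. The proof I will give has three parts: recompute the weights honestly, check that they depend only on $k$ and the vertex index, and then prove distinctness directly.

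\textbf{Step 1: the labeling is loop-independent.} The labeling $f$ is defined purely by the lexicographic order on arcs, so I first check that the arc set is loop-independent in the relevant sense. By Lemma~\ref{degree} the out-degrees depend only on whether a vertex is diagonal. Lemma~\ref{outsum}, Lemma~\ref{offset} and Corollary~\ref{offset2} then express $\Sminus(u)$ and the block offsets $\sig$ through $k$ and $u$ alone.

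\textbf{Step 2: the in-sums, where the proof can actually fail.} The head ordering inside a block, and the in-arcs of vertices $(a,0)$, are governed by the rule $x_1*x_2=y_1$, which does depend on the loop. Proposition~\ref{prop:insum} silently uses the first rule for every head, including $b=0$. So the main obstacle is to handle heads with second coordinate $0$. For a tail $(l_1,l_2)$ with $l_2\neq 0$, the out-neighbours are $(l_2,c)$ for $c\neq0$ together with $(l_1*l_2,0)$. The position of $(l_1*l_2,0)$ among these heads in integer order depends on $l_1*l_2$, so the label it receives is $\sig+1$ or $\sig+j$ depending on the loop.

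I would first try to show that the weight formulas of Propositions~\ref{prop:wnd} and~\ref{prop:wdiag} still hold after an exact recount. Concretely, I would sum over the in-arcs of $(a,0)$, namely the tails with $l_1*l_2=a$. Because $*$ is a left quasigroup, for each $l_1$ there is exactly one such $l_2$, which gives a permutation structure and should make the column sum collapse to a loop-free expression like $C(a)$. If the permutation genuinely shifts label positions, I expect a bounded loop-dependent perturbation instead. In that case I would bound it by at most $k$ per vertex and check that it is smaller than the gaps found in Step~3.

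\textbf{Step 3: distinctness.} I would not rely on consecutive differences, since they only show that neighbours differ. Instead I would write $w(a,b)$ from Proposition~\ref{prop:wnd} as
\[
\OmegaW-(k-1)^2(k+1)a-k(k-1)b+\text{small correction},
\]
where the correction lies in $[0,k-1]$. The $b$-steps have size $k(k-1)$, which exceeds the correction range, so weights within a row are strictly decreasing. Across rows, the total $b$-range is $k(k-1)^2$ plus corrections, which is less than the row step $(k-1)^2(k+1)$. This is exactly where $k\ge3$ enters. Together these give a strict global ordering of all weights, and hence antimagicness for every loop of order $k$ under the same $f$.
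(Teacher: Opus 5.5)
You are right that the paper's proof of this corollary, which just says every quantity depends only on $k$, rests on Proposition~\ref{prop:insum}, and that this proposition tacitly gives every tail $(l_1,l_2)$ the heads $(l_2,c)$, $c\in S$, i.e.\ it assumes $l_1*l_2=l_2$. Your Step~2 does not close this gap, though, and the fallback you describe cannot work.

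The in-neighbours of $(a,0)$, $a\neq 0$, are the vertices $(x,y_x)$ with $x*y_x=a$. By Corollary~\ref{offset2} their offsets are $\sigma(xk+y_x)=(k^2-1)x+ky_x-\iv{y_x>x}$, so $S^{+}(a,0)$ contains the term $k\sum_x y_x$. The left-quasigroup property only gives you one $y_x$ for each $x$. To make this sum loop-free you need $x\mapsto y_x$ to be a bijection of $\{1,\dots,k-1\}$ (right division in the loop), and then $\sum_x y_x=a+\binom{k}{2}$, not $ka$. An exact recount then gives
\[
S^{+}(a,0)=C(a)+k+k\binom{k}{2}-k(k-2)a-(k-1)-\gamma(a),
\]
where $\gamma(a)\in[1,k-1]$ depends on the loop. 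So the deviation from the paper's value is of order $k^3$, not at most $k$. (For heads $(a,b)$ with $b\neq 0$ the effect is indeed small, namely $-\#\{x: x*a>a\}$.) As a result, $w(a,0)$ leaves the slot that Proposition~\ref{prop:wnd} assigns it. Your Step~3 interleaving uses that formula with a correction in $[0,k-1]$, so it no longer applies.

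In fact no bounding argument can succeed, because the conclusion itself fails. Take $k=6$ and let $*$ restrict on $\{1,\dots,5\}$ to a loop with identity $1$ in which $x*x=1$ for all $x$; a loop of order $5$ with all squares equal to the identity exists. The in-arcs of $(1,0)$ then come from $(0,1)$ and from the diagonal vertices $(x,x)$, $1\le x\le 5$. In each of these blocks $(1,0)$ is the smallest head, so it receives the first label:
\[
S^{+}(1,0)=6+\sum_{x=1}^{5}(41x+1)=626,\qquad S^{-}(1,0)=36+37+\cdots+41=231,
\]
which gives $w(1,0)=395$. The vertex $(0,4)$ is loop-independent, since its tails $(x,0)$ have blocks $(0,c)$ regardless of $*$. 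It has $S^{+}(0,4)=554$ and $S^{-}(0,4)=159$, so $w(0,4)=395$ as well.

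Thus the same lexicographic $f$ is not antimagic for every loop, and the corollary cannot be proved as stated. The paper's weight formulas are valid only when $l_1*l_2=l_2$ for all $l_1,l_2$. In that case your Step~3 interleaving is correct and is a fine substitute for the paper's Step~4. Note that Step~4 already yields global distinctness, since all its consecutive differences are negative.
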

	
	\begin{proof}
		By Step~1 of Theorem~\ref{thm:main}, $\Sminus(u)$ is the same for
		every loop of order $k$.
		The distinctness of the out-sums (Step~2) and the weight contradiction
		(Step~4) therefore apply uniformly to all loops of order $k$.
	\end{proof}
	
\section*{Worked Example: Antimagic Labeling for $k = 3$}

We illustrate every step of Theorem \ref{thm:main} with the smallest nontrivial
case $k = 3$, so that the reader may verify each formula against
concrete numbers before examining the general argument.

\subsection*{The digraph $D_S^*$ and its arc set}

With $k = 3$ and $S = \{0,1,2\}$, the vertex set is
$S\times S = \{(a,b) : a,b \in \{0,1,2\}\}$, containing $n = 9$
vertices. Under the integer encoding $u = ak + b = 3a + b$,
the vertices are ordered as
\[
\underbrace{(0,0)}_{u=0},\;
\underbrace{(0,1)}_{1},\;
\underbrace{(0,2)}_{2},\;
\underbrace{(1,0)}_{3},\;
\underbrace{(1,1)}_{4},\;
\underbrace{(1,2)}_{5},\;
\underbrace{(2,0)}_{6},\;
\underbrace{(2,1)}_{7},\;
\underbrace{(2,2)}_{8}.
\]
The three \emph{diagonal} vertices are
$(0,0) = 0$, $(1,1) = 4$, and $(2,2) = 8$,
sitting at positions $j(k+1) = 0, 4, 8$.

The arc rule states that the out-neighbours of $(l_1,l_2)$ in
$D_S$ are $\{(l_2,c) : c \in S\}$. In $D_S^*$ we delete the self-loop
at each diagonal vertex. The full arc list, sorted
lexicographically by (tail, head), appears in Table~\ref{tab:arcs}.

\begin{table}[ht]
	\centering\small
	\renewcommand{\arraystretch}{1.3}
	\begin{tabular}{@{} c l c c l @{}}
		\toprule
		\textbf{Tail $u$} & \textbf{Targets} & $d_u$ & $\sigma_u$
		& \textbf{Labels assigned} \\
		\midrule
		$0 = (0,0)^{*}$ & $(0,1),\,(0,2)$                & 2 & 0  & $1,\, 2$ \\
		$1 = (0,1)$      & $(1,0),\,(1,1),\,(1,2)$        & 3 & 2  & $3,\, 4,\, 5$ \\
		$2 = (0,2)$      & $(2,0),\,(2,1),\,(2,2)$        & 3 & 5  & $6,\, 7,\, 8$ \\
		$3 = (1,0)$      & $(0,0),\,(0,1),\,(0,2)$        & 3 & 8  & $9,\, 10,\, 11$ \\
		$4 = (1,1)^{*}$  & $(1,0),\,(1,2)$                & 2 & 11 & $12,\, 13$ \\
		$5 = (1,2)$      & $(2,0),\,(2,1),\,(2,2)$        & 3 & 13 & $14,\, 15,\, 16$ \\
		$6 = (2,0)$      & $(0,0),\,(0,1),\,(0,2)$        & 3 & 16 & $17,\, 18,\, 19$ \\
		$7 = (2,1)$      & $(1,0),\,(1,1),\,(1,2)$        & 3 & 19 & $20,\, 21,\, 22$ \\
		$8 = (2,2)^{*}$  & $(2,0),\,(2,1)$                & 2 & 22 & $23,\, 24$ \\
		\bottomrule
	\end{tabular}
	\caption{\footnotesize \small{Diagonal vertices are marked with ${}^{*}$. Labels run from $1$ to $m = k^3 - k = 24$.}}
	\label{tab:arcs}
\end{table}

	\subsection*{Step 1: Computing the out-sums $S^{-}(u)$}

By Lemma~2.3, the out-sum of vertex $u$ is
$S^{-}(u) = d_u\,\sigma_u + \frac{d_u(d_u+1)}{2}$.
We verify this for every vertex using the values of $d_u$ and
$\sigma_u$ from Table~\ref{tab:arcs}.

\begin{table}[ht]
	\centering\small
	\renewcommand{\arraystretch}{1.3}
	\begin{tabular}{@{} c c c c l r @{}}
		\toprule
		$u$ & $(a,b)$ & $d_u$ & $\sigma_u$ & \textbf{Calculation}
		& $S^{-}(u)$ \\
		\midrule
		0 & $(0,0)$ & 2 & 0  & $2\cdot 0 + 3 = 3$   & 3  \\
		1 & $(0,1)$ & 3 & 2  & $3\cdot 2 + 6 = 12$   & 12 \\
		2 & $(0,2)$ & 3 & 5  & $3\cdot 5 + 6 = 21$   & 21 \\
		3 & $(1,0)$ & 3 & 8  & $3\cdot 8 + 6 = 30$   & 30 \\
		4 & $(1,1)$ & 2 & 11 & $2\cdot 11 + 3 = 25$  & 25 \\
		5 & $(1,2)$ & 3 & 13 & $3\cdot 13 + 6 = 45$  & 45 \\
		6 & $(2,0)$ & 3 & 16 & $3\cdot 16 + 6 = 54$  & 54 \\
		7 & $(2,1)$ & 3 & 19 & $3\cdot 19 + 6 = 63$  & 63 \\
		8 & $(2,2)$ & 2 & 22 & $2\cdot 22 + 3 = 47$  & 47 \\
		\bottomrule
	\end{tabular}
\caption{\footnotesize \small{Out-sums}}
\end{table}
\noindent
Quick sanity check: the labels $1$ through $24$ are split among
the nine vertices, so $\sum_u S^{-}(u) =
3{+}12{+}21{+}30{+}25{+}45{+}54{+}63{+}47 = 300 =
\frac{24\cdot 25}{2}$.

\smallskip
The nine out-sums $3,12,21,30,25,45,54,63,47$ are pairwise
distinct, confirming Step~2 of the theorem. Note, however, that
the sequence is \emph{not} monotone: there is a dip at the
diagonal vertices $u = 4$ and $u = 8$.

\subsection*{Step 3: Computing the in-sums $S^{+}(u)$}

The in-arcs of $u = (a,b)$ come from sources $(l,a)$ for each
$l \in S$, excluding $l = a$ when $u$ is diagonal. From
Table~\ref{tab:arcs} we can read off the label of each in-arc
directly.

\medskip
\noindent\textbf{Vertex $0 = (0,0)$, diagonal.}
In-arcs from $(l,0)$ for $l \ne 0$, namely $l = 1$ and $l = 2$:
\begin{itemize}[nosep]
	\item Arc $(1,0) \to (0,0)$: source $u=3$, target $(0,0) = 0$
	is the 1st out-neighbour of vertex~3. Label $= 9$.
	\item Arc $(2,0) \to (0,0)$: source $u=6$, target $(0,0) = 0$
	is the 1st out-neighbour of vertex~6. Label $= 17$.
\end{itemize}
$S^{+}(0) = 9 + 17 = 26.$

\medskip
\noindent\textbf{Vertex $1 = (0,1)$, non-diagonal.}
In-arcs from $(l,0)$ for all $l = 0,1,2$:
\begin{itemize}[nosep]
	\item $(0,0) \to (0,1)$: source $u=0$ (diagonal, $d_0=2$),
	target $1$ is the 1st out-neighbour. Label $= 1$.
	\item $(1,0) \to (0,1)$: source $u=3$, target $1$ is the 2nd.
	Label $= 10$.
	\item $(2,0) \to (0,1)$: source $u=6$, target $1$ is the 2nd.
	Label $= 18$.
\end{itemize}
$S^{+}(1) = 1 + 10 + 18 = 29.$

\medskip
Proceeding similarly for every vertex, we obtain:

\begin{table}[ht]
	\centering\small
	\renewcommand{\arraystretch}{1.3}

	\begin{tabular}{@{} c c l r @{}}
		\toprule
		$u$ & $(a,b)$ & \textbf{In-arc labels (by source)}
		& $S^{+}(u)$ \\
		\midrule
		0 & $(0,0)$ & from $(1,0)$: 9;\  from $(2,0)$: 17
		& 26 \\
		1 & $(0,1)$ & from $(0,0)$: 1;\  $(1,0)$: 10;\  $(2,0)$: 18
		& 29 \\
		2 & $(0,2)$ & from $(0,0)$: 2;\  $(1,0)$: 11;\  $(2,0)$: 19
		& 32 \\
		3 & $(1,0)$ & from $(0,1)$: 3;\  $(1,1)$: 12;\  $(2,1)$: 20
		& 35 \\
		4 & $(1,1)$ & from $(0,1)$: 4;\  $(2,1)$: 21
		& 25 \\
		5 & $(1,2)$ & from $(0,1)$: 5;\  $(1,1)$: 13;\  $(2,1)$: 22
		& 40 \\
		6 & $(2,0)$ & from $(0,2)$: 6;\  $(1,2)$: 14;\  $(2,2)$: 23
		& 43 \\
		7 & $(2,1)$ & from $(0,2)$: 7;\  $(1,2)$: 15;\  $(2,2)$: 24
		& 46 \\
		8 & $(2,2)$ & from $(0,2)$: 8;\  $(1,2)$: 16
		& 24 \\
		\bottomrule
	\end{tabular}
	\caption{\footnotesize \small{In-sums}}
\end{table}

\noindent Sanity check:
$\sum_u S^{+}(u) = 26{+}29{+}32{+}35{+}25{+}40{+}43{+}46{+}24
= 300 = \frac{m(m+1)}{2}$.

\subsection*{Verification of the in-sum formulas}

We now confirm that the closed-form expressions from
Proposition~2.11 reproduce the numbers above.

For $k=3$, the column-sum of offsets (Lemma~2.10) is
$C(a) = (k^2{-}1)\bigl[\frac{k(k-1)}{2}+a\bigr]
= 8(3+a)$:
\[
C(0) = 24, \qquad C(1) = 32, \qquad C(2) = 40.
\]

\noindent\textbf{Non-diagonal formula:}
$S^{+}(a,b) = C(a) + k(b{+}1) - \iv{b > a}$ \;(Proposition~2.11(a)).

Take $u = 5 = (1,2)$: here $a=1$, $b=2$, $b > a$ is true, so
$S^{+} = C(1) + 3\cdot 3 - 1 = 32 + 9 - 1 = 40$. 

Take $u = 3 = (1,0)$: here $a=1$, $b=0$, $b > a$ is false, so
$S^{+} = C(1) + 3\cdot 1 - 0 = 32 + 3 = 35$. 

\smallskip\noindent\textbf{Diagonal formula:}
$S^{+}(a,a) = C(a) - \sigma(a(k{+}1)) + (k{-}1)(a{+}1)$
\;(Proposition~2.11(b)).

Take $u = 4 = (1,1)$: we have $\sigma(4) = 11$ (from Table~\ref{tab:arcs}), so
$S^{+} = 32 - 11 + 2\cdot 2 = 32 - 11 + 4 = 25$.

	\subsection*{Step 4: Computing the weights $w(u)$ and verifying
	strict decrease}

The weight at each vertex is $w(u) = S^{+}(u) - S^{-}(u)$.
Table~\ref{tab:weights} collects all values together with the
consecutive difference $\Delta w = w(u{+}1) - w(u)$ and the
transition type from Step~4 of the proof.

\begin{table}[ht]
	\centering
	\renewcommand{\arraystretch}{1.3}
	\begin{tabular}{@{} c c c r r r r l @{}}
		\toprule
		$u$ & $(a,b)$ & Diag?
		& $S^{-}(u)$ & $S^{+}(u)$ & $w(u)$ & $\Delta w$ & Case \\
		\midrule
		$0$ & $(0,0)$ & Y & $3$  & $26$ & $23$  & ---  & \\
		$1$ & $(0,1)$ & N & $12$ & $29$ & $17$  & $-6$ & A \\
		$2$ & $(0,2)$ & N & $21$ & $32$ & $11$  & $-6$ & D \\
		$3$ & $(1,0)$ & N & $30$ & $35$ & $5$   & $-6$ & B, $a{=}1$ \\
		$4$ & $(1,1)$ & Y & $25$ & $25$ & $0$   & $-5$ & C, $a{=}1$ \\
		$5$ & $(1,2)$ & N & $45$ & $40$ & $-5$  & $-5$ & D \\
		$6$ & $(2,0)$ & N & $54$ & $43$ & $-11$ & $-6$ & A \\
		$7$ & $(2,1)$ & N & $63$ & $46$ & $-17$ & $-6$ & B, $a{=}2$ \\
		$8$ & $(2,2)$ & Y & $47$ & $24$ & $-23$ & $-6$ & \\
		\bottomrule
	\end{tabular}
	\caption{\footnotesize Complete weight computation for $k = 3$. Every consecutive difference $\Delta w$ is strictly negative.}
	\label{tab:weights}
\end{table}

\noindent\textbf{Checking the case formulas.}
The four transition cases from the proof predict the following
values for $k = 3$ (where $k(k{-}1) = 6$):

\begin{itemize}[nosep, leftmargin=2em]
	\item \textbf{Case~A} (both non-diagonal, same row):
	$\Delta w = -k(k{-}1) = -6$. \\
	Observed at transitions $0{\to}1$, $6{\to}7$: both give $-6$. 
	
	\item \textbf{Case~B} (non-diagonal $\to$ diagonal):
	$\Delta w = -(k{-}1)^2 - a = -4 - a$. \\
	At $3{\to}4$ ($a=1$): predicted $-5$, observed $-5$.  \\
	At $7{\to}8$ ($a=2$): predicted $-6$, observed $-6$. 
	
	\item \textbf{Case~C} (diagonal $\to$ non-diagonal):
	$\Delta w = a - k(k{-}1) = a - 6$. \\
	At $0{\to}1$ is actually Case~A. The genuine Case~C
	transitions are $4{\to}5$ ($a=1$): predicted $-5$, observed $-5$.

	\item \textbf{Case~D} (row change):
	$\Delta w = -k(k{-}1) = -6$. \\
	At $2{\to}3$ and $5{\to}6$: both give $-6$. 
\end{itemize}

\noindent\textbf{Verifying the weight formula.}
Proposition~2.12 with $k = 3$ gives $\Omega = 8\cdot 3 + 3 - 6 = 21$,
so the non-diagonal weight is
\[
w(a,b) = 21 - 16a - 6b + 2\iv{b > a}.
\]
For instance, $w(2,1) = 21 - 32 - 6 + 0 = -17$ (since $1 > 2$ is
false). This matches the table. For the diagonal vertex $(1,1)$,
Proposition~2.13 gives
$w(1,1) = w_{\mathrm{nd}}(1,1) + (k{-}1{-}a) =
(21 - 16 - 6 + 0) + (2 - 1) = -1 + 1 = 0$. 

\subsection*{Observations}

Several features visible in this example persist for all $k \ge 3$:

\begin{enumerate}[label=(\roman*), nosep]
	\item The weight sequence is \textbf{strictly decreasing}:
	$23 > 17 > 11 > 5 > 0 > {-}5 > {-}11 > {-}17 > {-}23$.
	This is the content of Step~4.
	
	\item The sequence is \textbf{anti-symmetric} about its midpoint:
	$w(u) = -w(k^2{-}1{-}u)$ for all $u$. This symmetry
	follows from the structure of the weight formula and guarantees
	that the sum $\sum_u w(u) = 0$ (Lemma~2.6).
	
	\item The consecutive differences take only two values, $-6$
	and $-5$. In general, the differences range between
	$-k(k{-}1)$ (Cases A and D) and the smallest Case~C value
	$-(k{-}1)(k{-}2)$.
	
	\item The middle vertex $u = 4 = (1,1)$ carries weight $w = 0$,
	and it is the \emph{only} vertex with this property. This
	rules out a magic labeling (Corollary~4.2).
	
	\item The out-sum $S^{-}(u)$ is \emph{not} monotone in $u$ (it
	dips at diagonal vertices), yet the weight $w(u)$ is strictly
	monotone. The in-sum $S^{+}(u)$ adjusts just enough to
	enforce the global decrease.
	
	\item All nine weights are distinct integers, confirming that $f$
	is an antimagic labeling of $D_S^*$. Since every formula used
	depends only on $k = 3$ and not on the specific loop
	$(T,\circ)$, the same weight sequence arises for every loop
	of order~$3$.
\end{enumerate}

\end{document}